\newtheorem{theorem}{Theorem}[section]
\newtheorem{proposition}[theorem]{Proposition}
\newtheorem{lemma}[theorem]{Lemma}
\newtheorem{corollary}[theorem]{Corollary}
\def\squarebox#1{\hbox to #1{\hfill\vbox to #1{\vfill}}}
\def\qed{\hspace*{\fill}%
        \vbox{\hrule\hbox{\vrule\squarebox{.667em}\vrule}\hrule}\smallskip}
\newenvironment{proof}{\begin{trivlist}%1g263
\item[\hspace{\labelsep}{\em\noindent Proof.~}]}{\qed\end{trivlist}}
\title{Strategy-Proof Approximation Algorithms for the Stable Marriage Problem with Ties and Incomplete Lists}
\author{Koki Hamada$^{1}$, Shuichi Miyazaki$^{2}$, Hiroki Yanagisawa$^{3}$
\\
\small $^{1}$NTT  Corporation.\\
\small \texttt{hamada.koki@lab.ntt.co.jp}\\
\small $^{2}$Kyoto University.\\
\small \texttt{shuichi@media.kyoto-u.ac.jp}\\
\small $^{3}$IBM Research - Tokyo.\\
\small \texttt{yanagis@jp.ibm.com}}
\date{}
\begin{document}

\sloppy

\maketitle

\begin{abstract}
In the stable marriage problem (SM), a mechanism that always outputs a stable matching is called a {\em stable mechanism}.  One of the well-known stable mechanisms is the man-oriented Gale-Shapley algorithm (MGS).  MGS has a good property that it is strategy-proof to the men's side, i.e., no man can obtain a better outcome by falsifying a preference list.  We call such a mechanism a {\em man-strategy-proof mechanism}.  Unfortunately, MGS is not a woman-strategy-proof mechanism.  (Of course, if we flip the roles of men and women, we can see that the woman-oriented Gale-Shapley algorithm (WGS) is a woman-strategy-proof but not a man-strategy-proof mechanism.)  Roth has shown that there is no stable mechanism that is simultaneously man-strategy-proof and woman-strategy-proof, which is known as Roth's impossibility theorem.

In this paper, we extend these results to the stable marriage problem with ties and incomplete lists (SMTI).  Since SMTI is an extension of SM, Roth's impossibility theorem takes over to SMTI.  Therefore, we focus on the one-sided-strategy-proofness.  In SMTI, one instance can have stable matchings of different sizes, and it is natural to consider the problem of finding a largest stable matching, known as MAX SMTI.  Thus we incorporate the notion of approximation ratio used in the theory of approximation algorithms.  We say that a stable-mechanism is {\em $c$-approximate-stable mechanism} if it always returns a stable matching of size at least $1/c$ of a largest one.  We also consider a  restricted variant of MAX SMTI, which we call MAX SMTI-1TM, where only men's lists can contain ties (and women's lists must be strictly ordered).

Our results are summarized as follows: (i) MAX SMTI admits both a man-strategy-proof $2$-approximate-stable mechanism and a woman-strategy-proof $2$-approximate-stable mechanism.  (ii) MAX SMTI-1TM admits a woman-strategy-proof 2-approximate-stable mechanism.  (iii) MAX SMTI-1TM admits a man-strategy-proof 1.5-approximate-stable mechanism.  All these results are tight in terms of approximation ratios.  Also, all these strategy-proofness results apply for strategy-proofness against coalitions.
\end{abstract}

\section{Introduction}\label{sec:intro}
%\subsection{Background}

%\subsection{The stable marriage problem}
An instance of the {\em stable marriage problem} ({\em SM}) \cite{gs62} consists of $n$ men $m_{1}, m_{2}, \ldots, m_{n}$, $n$ women $w_{1}, w_{2}, \ldots, w_{n}$, and each person's preference list, which is a total order of all the members of the opposite gender.  If a person $q_{i}$ precedes a person $q_{j}$ in a person $p$'s preference list, then we write $q_{i} \succ_{p} q_{j}$ and interpret it as ``$p$ prefers $q_{i}$ to $q_{j}$''.  In this paper, we denote a preference list in the following form:

\smallskip

$m_{2}$: $w_{3}$ \ $w_{1}$ \ $w_{4}$ \ $w_{2}$,

\smallskip

\noindent
which means that $m_{2}$ prefers $w_{3}$ best, $w_{1}$ 2nd, $w_{4}$ 3rd, and $w_{2}$ last (this example is for $n=4$).

A {\em matching} is a set of $n$ (man, woman)-pairs in which no person appears more than once.  For a matching $M$, $M(p)$ denotes the partner of a person $p$ in $M$.  If, for a man $m$ and a woman $w$, both $w \succ_{m} M(m)$ and $m \succ_{w} M(w)$ hold, then we say that $(m,w)$ is a {\em blocking pair} for $M$ or $(m,w)$ {\em blocks} $M$.  Note that both $m$ and $w$ have incentive to be matched with each other ignoring the given partner, so it can be thought of as a threat for the current matching $M$.  A matching with no blocking pair is a {\em stable matching}.  It is known that any instance admits at least one stable matching, and one can be found by the {\em Gale-Shapley algorithm} (or {\em GS algorithm} for short) in $O(n^{2})$ time \cite{gs62}.  There have been a plenty of research results on this problem from viewpoints of Economics, Computer Science, Mathematics, etc (see \cite{gi89, rs90, man13} e.g.).

\subsection{Strategy-proofness}

The stable marriage problem can be seen as a game among participants, who have true preferences in mind, but may submit a falsified preference list hoping to obtain a better partner than the one assigned when true preference lists are used.  Formally, let $S$ be a {\em mechanism}, that is, a mapping from instances to matchings, and we denote $S(I)$ the matching output by $S$ for an instance $I$.  We say that $S$ is a {\em stable mechanism} if, for any instance $I$, $S(I)$ is a stable matching for $I$.  For a mechanism $S$, let $I$ be an instance, $M$ be a matching such that $M=S(I)$, and $p$ be a person.  We say that $p$ {\em has a successful strategy in $I$} if there is an instance $I'$ in which people except for $p$ have the same preference lists in $I$ and $I'$, and $p$ prefers $M'$ to $M$ (i.e., $M'(p) \succ_{p} M(p)$ with respect to $p$'s preference list in $I$), where $M'$ is a matching such that $M'=S(I')$.  This situation is interpreted as follows: $I$ is the set of true preference lists, and by submitting a falsified preference list (which changes the set of lists to $I'$), $p$ can obtain a better partner $M'(p)$.  We say that $S$ is a {\em strategy-proof mechanism} if, when $S$ is used, no person has a successful strategy in any instance.  Also we say that $S$ is a {\em man-strategy-proof mechanism} if, when $S$ is used, no man has a successful strategy in any instance.  A {\em woman-strategy-proof mechanism} is defined analogously.  A mechanism is a {\em one-sided-strategy-proof mechanism} if it is either a man-strategy-proof mechanism or a woman-strategy-proof mechanism.

It is known that there is no strategy-proof stable mechanism for SM \cite{roth82}, which is known as {\em Roth's impossibility theorem}.  By contrast, the man-oriented GS algorithm, {\em MGS} for short, (in which men send and women receive proposals; see Appendix \ref{sec:MGS}) is a man-strategy-proof stable mechanism for SM \cite{roth82, df81}.  Of course, by the symmetry of men and women, the woman-oriented GS algorithm ({\em WGS}) is a woman-strategy-proof stable mechanism.

\subsection{Ties and Incomplete Lists}

One of the most natural extensions of SM is the {\em Stable Marriage with Ties and Incomplete lists}, denoted {\em SMTI}.  An instance of SMTI consists of  $n$ men, $n$ women, and each person's preference list.  A preference list may include ties, which represent indifference between two or more persons, and may be incomplete, meaning that a preference list may contain only a {\em subset} of people in the opposite gender.  Such a preference list may be of the following form:

\smallskip

$m_{2}$: $w_{3}$ \ ($w_{1}$ \ $w_{4}$),

\smallskip

\noindent
which represents that $m_{2}$ prefers $w_{3}$ best, $w_{1}$ and $w_{4}$ 2nd with equal preference, but does not want to be matched with $w_{2}$.  If a person $q$ is included in $p$'s preference list, we say that $q$ is {\em acceptable} to $p$.  A {\em matching} is a set of mutually acceptable (man, woman)-pairs in which no person appears more than once.  The {\em size} of a matching $M$, denoted $|M|$, is the number of pairs in $M$.  For a matching $M$, $(m,w)$ is a {\em blocking pair} if (i) $m$ and $w$ are acceptable to each other, (ii) $m$ is single in $M$ or $w \succ_{m} M(m)$, and (iii) $w$ is single in $M$ or $m \succ_{w} M(w)$.  A matching without blocking pairs is a {\em stable matching}. (When ties come into consideration, there are three definitions for stability, {\em super}, {\em strong}, and {\em weak} stabilities.  Here we are considering weak stability which is the most natural notion among the three.  In the case of super and strong stabilities, there exist instances that do not admit a stable matching.  See \cite{gi89, man13} for more details.)  

Note that in the case of SM, the size of a matching is always $n$ by definition, but it may be less than $n$ in the case of SMTI.  In fact, there is an SMTI-instance that admits stable matchings of different sizes, and the problem of finding a maximum size stable matching, called {\em MAX SMTI}, is NP-hard \cite{immm99, miimm02}.  There are a plenty of approximability and inapproximability results for MAX SMTI.  The current best upper bound on the approximation ratio is 1.5 \cite{m09,p11,k13} and lower bounds are $33/29 \simeq 1.1379$ assuming P$\neq$NP and $4/3 \simeq 1.3333$ assuming the Unique Games Conjecture (UGC) \cite{yanag07}.  There are several attempts to obtain better algorithms (e.g., polynomial-time exact algorithms or polynomial-time approximation algorithms with better approximation ratio) for restricted instances; one of the most natural restrictions is to admit ties in preference lists of only one gender, which we call {\em SMTI-1T}.  MAX SMTI-1T (i.e., the problem of finding a maximum cardinality stable matching in SMTI-1T) remains NP-hard, and as for the approximation ratio, the current best upper bound is $1+1/e \simeq 1.368$ \cite{lp19} and lower bounds are $21/19 \simeq 1.1052$ assuming P$\neq$NP and $5/4 = 1.25$ assuming UGC \cite{himy07, yanag07}.

\subsection{Our Contributions}\label{subsec:ourresults}

In this paper, we consider the strategy-proofness in MAX SMTI, and investigate the trade-off between strategy-proofness and approximability.  In the case of incomplete preference lists, there may be unmatched (i.e., single) persons.  Thus, we have to extend the definition of a person preferring one matching to another.  We say that a person $p$ prefers $M'$ to $M$ if either $M'(p) \succ_{p} M(p)$ holds or $p$ is single in $M$ but is matched in $M'$ with some acceptable woman.  Then the definition of strategy-proofness for SM naturally takes over to SMTI.

Let $I$ be a MAX SMTI instance and $M_{opt}$ be a maximum size stable matching for $I$.  A stable matching $M$ for $I$ is called an {\em $r$-approximate solution for $I$} if $\frac{|M_{opt}|}{|M|} \leq r$.  A stable mechanism $S$ is called an {\em $r$-approximate-stable mechanism} if $S(I)$ is an $r$-approximate solution for any MAX SMTI instance $I$.

Firstly, since SMTI is a generalization of SM, Roth's impossibility theorem for SM \cite{roth82} holds also for MAX SMTI (regardless of approximability):

\begin{proposition}
There is no strategy-proof stable mechanism for MAX SMTI.
\end{proposition}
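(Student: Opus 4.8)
The plan is to obtain this as an immediate corollary of Roth's impossibility theorem for SM, using the fact that SM is precisely the subclass of MAX SMTI consisting of instances in which every preference list is a complete strict order over the opposite gender. First I would fix an arbitrary stable mechanism $S$ for MAX SMTI and restrict attention to its behaviour on this subclass. The key observation is that for such an instance $I$, every matching has size $n$, so no person is ever single; consequently the three conditions in the SMTI definition of a blocking pair collapse to the two conditions $w \succ_m M(m)$ and $m \succ_w M(w)$ of the SM definition, and the SMTI and SM notions of stable matching coincide on $I$. Hence $S(I)$ is a stable matching for $I$ in the SM sense, i.e., the restriction of $S$ to SM instances is a stable mechanism for SM.

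Next I would transfer a successful strategy. By Roth's impossibility theorem, there exist an SM instance $I$ and a person $p$ who has a successful strategy under $S$ restricted to SM: an SM instance $I'$ that agrees with $I$ on everyone except $p$, with $S(I')(p) \succ_p S(I)(p)$ with respect to $p$'s true list in $I$. Since $I$ and $I'$ are also MAX SMTI instances, and since in the absence of single persons the SMTI notion of ``$p$ prefers $M'$ to $M$'' reduces exactly to $M'(p) \succ_p M(p)$, the same pair $(I, I')$ witnesses that $p$ has a successful strategy under $S$ viewed as a MAX SMTI mechanism. Therefore $S$ is not strategy-proof, and since $S$ was arbitrary the proposition follows.

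There is no genuine mathematical obstacle here; the only point that needs care is the bookkeeping of definitions. In particular one should note that, although MAX SMTI in principle allows $p$ to deviate to an incomplete or tied list, we do not need such deviations: the deviation supplied by Roth's theorem is itself a complete strict list, so it is a legitimate MAX SMTI input and the reduction goes through verbatim. (The statement could alternatively be cited directly as a known consequence of \cite{roth82}, but the short argument above makes the inheritance explicit.)
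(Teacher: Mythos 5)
Your argument is correct and is essentially the paper's own reasoning: the paper simply observes that MAX SMTI generalizes SM, so Roth's impossibility theorem carries over, and your write-up just makes that inheritance explicit (restriction of the mechanism to complete strict lists, coincidence of the stability and preference notions, and the fact that the falsifying list from Roth's theorem is itself an SM list). One small wording fix: under the SMTI definitions not \emph{every} matching on a complete-list instance has size $n$ (the empty matching is a matching), but every \emph{stable} matching does --- a single man and a single woman would form a blocking pair --- which is all your argument needs.
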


Hence we focus on {\em one-sided}-strategy-proofness.  We show that there is a 2-approximate-stable mechanism, which is achieved by a simple extension of the GS algorithm.  We also show that this result is tight:

\begin{theorem}\label{thm:SMTI}
MAX SMTI admits both a man-strategy-proof $2$-approximate-stable mechanism and a woman-strategy-proof $2$-approximate-stable mechanism.  On the other hand, for any positive $\epsilon$, MAX SMTI admits neither a man-strategy-proof $(2-\epsilon)$-approximate-stable mechanism nor a woman-strategy-proof $(2-\epsilon)$-approximate-stable mechanism.
\end{theorem}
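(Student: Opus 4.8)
I would analyze the mechanism $S$ that proceeds as follows: fix once and for all a deterministic tie-breaking rule $\sigma$ (for instance, within every person's list order tied members of the opposite sex by increasing index), apply $\sigma$ to the input profile $I$ to obtain a strict (but possibly incomplete) instance $\sigma(I)$, run MGS on $\sigma(I)$, and output the resulting matching $M$. Three facts need to be checked. (i) $M$ is weakly stable for $I$: since $\sigma$ only refines preferences and leaves acceptability unchanged, any pair blocking $M$ in $I$ would also block $M$ in $\sigma(I)$, contradicting stability of the MGS output. (ii) $M$ is a $2$-approximation: more generally, any weakly stable matching $M$ satisfies $|M_{opt}|\le 2|M|$ for every matching $M_{opt}$, which follows by decomposing $M\oplus M_{opt}$ into alternating paths and cycles --- the only component type beating the $2{:}1$ ratio is a length-$1$ augmenting path for $M$, i.e.\ an $M_{opt}$-edge between two persons left single by $M$, but such a mutually acceptable pair would block $M$. (iii) $S$ is man-strategy-proof: when a man $m$ submits a falsified list, $\sigma(I)$ changes only in $m$'s (strict) list, so by the man-strategy-proofness of MGS (which holds also for incomplete lists) $m$ does not strictly improve with respect to his list in $\sigma(I)$; since that list refines $m$'s true list, $m$ does not strictly improve with respect to his true list either, and the same remark covers the single-versus-matched case. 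Running WGS instead gives the woman-strategy-proof $2$-approximate-stable mechanism, and the same refinement argument, together with the fact that MGS/WGS is group-strategy-proof for the proposing side, upgrades everything to strategy-proofness against coalitions.

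\textbf{Negative direction.} It suffices to show that every man-strategy-proof stable mechanism $S$ is forced, on some instance, to output a stable matching whose size is exactly half of the optimum. I would use the instance $I$ with three men $m_1,m_2,m_3$ and three women $w_1,w_2,w_3$ in which $m_1$'s list is $w_1\ w_2$, $m_2$'s list is $w_1\ w_3$, $m_3$'s list is empty, $w_1$'s list is $(m_1\ m_2)$, $w_2$'s list is $m_1$, and $w_3$'s list is $m_2$. One checks that the only weakly stable matchings of $I$ are $M_A=\{(m_1,w_1),(m_2,w_3)\}$ and $M_B=\{(m_1,w_2),(m_2,w_1)\}$, both of size $2$. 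If $S(I)=M_A$, then $m_2$ gets his second choice $w_3$, so I let $m_2$ misreport the list $w_1$; one checks that the resulting instance $I'$ has optimum size $2$ and exactly the two stable matchings $\{(m_1,w_1)\}$ and $\{(m_1,w_2),(m_2,w_1)\}$, and since the latter would give $m_2$ the partner $w_1\succ_{m_2}w_3$ --- a successful manipulation, impossible under man-strategy-proofness --- we must have $S(I')=\{(m_1,w_1)\}$, a stable matching of size $1$ against optimum $2$. The case $S(I)=M_B$ is symmetric: $m_1$ misreports $w_1$, and man-strategy-proofness pins $S$ to the size-$1$ stable matching of the post-deviation instance. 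Either way $S$ has approximation ratio $2$ on some instance, hence is not $(2-\epsilon)$-approximate-stable for any $\epsilon>0$; the woman-strategy-proof case is symmetric, and since coalitional strategy-proofness implies individual strategy-proofness, the impossibility holds against coalitions as well.

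\textbf{The main obstacle.} The delicate step, the one I expect to require the most care, is designing the instance $I$ in the negative direction. When the ``losing'' man truncates his list to $w_1$, the mechanism retains an escape route: it may match $w_1$ to the other man and leave the truncator single, which is itself a stable matching. The construction must guarantee that in the truncated instance this escape matching is far from optimal, and this is precisely why $m_1$ and $m_2$ must each carry a private fallback woman ($w_2$ and $w_3$, respectively): those fallbacks force the post-deviation optimum to have size $2$ while man-strategy-proofness keeps the mechanism pinned to a size-$1$ stable matching. The remainder of the negative direction is a finite, if somewhat tedious, enumeration of blocking pairs for $I$ and for the two post-deviation instances.
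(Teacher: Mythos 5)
Your proposal is correct and takes essentially the same approach as the paper: a fixed tie-breaking rule followed by MGS (resp.\ WGS), with stability, the known $2$-approximation bound for weakly stable matchings, and strategy-proofness inherited from MGS on the tie-broken SMI instance; and for the lower bound, the same three-men/three-women instance with one tie (yours is the paper's $I_1$ up to relabeling the women) manipulated by truncations, only phrased in contrapositive form (assume strategy-proofness, force a ratio-$2$ output) rather than the paper's direct form. No gaps.
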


We next consider a restricted version, MAX SMTI-1T.  Throughout the paper, we assume that ties appear in men's lists only (and women's lists must be strict).  In the following, we write MAX SMTI-1T as {\em MAX SMTI-1TM} to stress that only men's preference lists may contain ties.  As for woman-strategy-proofness, we obtain the same result as for MAX SMTI, which is a direct consequence of Theorem \ref{thm:SMTI}:

\begin{corollary}\label{coro:SMTI-1TM}
MAX SMTI-1TM admits a woman-strategy-proof 2-approximate-stable mechanism, but no woman-strategy-proof $(2-\epsilon)$-approximate-stable mechanism for any positive $\epsilon$.
\end{corollary}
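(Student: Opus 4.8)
The plan is to derive both halves of the corollary directly from Theorem~\ref{thm:SMTI}, using the single structural fact that every MAX SMTI-1TM instance is in particular a MAX SMTI instance, and that the notion of stability — and hence of a maximum-size stable matching and of the approximation ratio — is defined identically in the two settings.

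For the positive direction, I would take the woman-strategy-proof $2$-approximate-stable mechanism $S$ for MAX SMTI guaranteed by Theorem~\ref{thm:SMTI} and simply restrict its domain to MAX SMTI-1TM instances. For any MAX SMTI-1TM instance $I$, $S(I)$ is a stable matching of $I$ with $|M_{opt}|/|S(I)| \le 2$, where $M_{opt}$ is a maximum-size stable matching of $I$; this is precisely the $2$-approximate-stable property. For woman-strategy-proofness, observe that when a woman $w$ deviates she produces a new instance $I'$, and since $S$ is woman-strategy-proof over the entire class of MAX SMTI instances — which contains $I'$ regardless of whether $w$'s submitted list is strict — $w$ cannot obtain a strictly better partner. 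Thus the restriction of $S$ witnesses the first claim.

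For the negative direction, I would revisit the family of hard instances used to establish the woman-strategy-proof $(2-\epsilon)$ lower bound in Theorem~\ref{thm:SMTI}. The point is that this family can be taken to lie entirely within MAX SMTI-1TM: all women have strict preference lists, and only men's lists contain ties. (Concretely, the woman-strategy-proof lower bound for MAX SMTI is obtained by swapping the roles of men and women in the man-strategy-proof lower bound, whose instances have strict \emph{men's} lists; after the swap it is the \emph{women's} lists that are strict.) Hence, if MAX SMTI-1TM admitted a woman-strategy-proof $(2-\epsilon)$-approximate-stable mechanism $S$, then $S$ would in particular be defined on this family of MAX SMTI-1TM instances and would have to be simultaneously woman-strategy-proof and $(2-\epsilon)$-approximate-stable on them, contradicting the lower bound argument of Theorem~\ref{thm:SMTI}.

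The only step requiring genuine care is the last paragraph: one must verify that the specific instances appearing in the proof of the lower bound of Theorem~\ref{thm:SMTI} indeed have strict women's preference lists (equivalently, that gender-swapping the man-strategy-proof lower bound instances produces MAX SMTI-1TM instances). Once this bookkeeping is in place, the corollary follows with no further combinatorial work, so I expect this domain check to be the main — and essentially only — obstacle.
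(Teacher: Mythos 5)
Your proposal is correct and follows essentially the same route as the paper: the positive half is inherited by restricting the woman-strategy-proof $2$-approximate-stable mechanism of Lemma~\ref{lm:positiveSMTI} to the subclass of SMTI-1TM instances, and the negative half follows because the instance $I_{2}$ (and its woman-manipulated variants, which keep women's lists strict) used in the woman-side argument of Lemma~\ref{lm:negativeSMTI} already has ties only in men's lists, hence is a MAX SMTI-1TM instance. The ``bookkeeping'' check you flag is exactly the observation the paper makes, so no further work is needed.
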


For man-strategy-proofness, we can reduce the approximation ratio to 1.5, which is the main result of this paper.

\begin{theorem}\label{thm:SMTI-1TM}
MAX SMTI-1TM admits a man-strategy-proof 1.5-approximate-stable mechanism, but no man-strategy-proof $(1.5-\epsilon)$-approximate-stable mechanism for any positive $\epsilon$.
\end{theorem}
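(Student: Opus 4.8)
The statement has two halves, and I would prove them separately.

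\textbf{Positive direction --- the mechanism.} For MAX SMTI-1TM I would exhibit a polynomial-time mechanism $S$ obtained by refining the man-oriented Gale--Shapley algorithm so as to exploit the fact that ties occur only on the men's side. Since every woman's list is strict, the only discretion the proposal process has is, whenever a man's current most-preferred remaining women form a tie-group, which woman of that group he is eventually settled on; $S$ resolves these tie-groups by a globally computed rule and, in addition, gives an extra round of proposals to men who would otherwise end up unmatched, the purpose of both features being to eliminate the locally wasteful configurations that make naive tie-breaking only a $2$-approximation. Stability is then the usual end-of-run check: for any pair $(m,w)$, either $w$ never rejected $m$, so $m$ does not strictly prefer $w$ to his assigned partner, or $w$ is holding someone she strictly prefers to $m$; in either case $(m,w)$ does not block.

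\textbf{Positive direction --- approximation ratio.} I would run the standard symmetric-difference argument. Writing $M=S(I)$ and letting $M_{opt}$ be a maximum stable matching, the graph $M\oplus M_{opt}$ is a disjoint union of alternating paths and cycles, and only augmenting paths (those with one more edge of $M_{opt}$ than of $M$) worsen the ratio. A length-one augmenting path would be an $M$-unmatched man and an $M$-unmatched woman acceptable to each other, which blocks $M$ --- impossible. The heart of the matter is to show that $S$ admits no augmenting path of length three; granted that, every augmenting path contributes at least three edges of $M_{opt}$ against at least two edges of $M$, and summing over components gives $|M_{opt}|\le \frac{3}{2}|M|$. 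This ``no length-three augmenting path'' lemma, where the one-sidedness of ties is essential and where the extra proposal round earns its keep, I expect to be the main technical obstacle of the positive part.

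\textbf{Positive direction --- strategy-proofness.} I would argue that $S$ is man-strategy-proof, indeed against coalitions, by relating it to ordinary MGS: I would show that the matching $S$ outputs equals the one MGS produces on the strict instance obtained from the reported profile by fixing the tie-breaking that $S$ computes, and that this tie-breaking is robust in the sense that a man cannot tilt it in his own favour (permuting women inside one of his tie-groups cannot help him, since he is indifferent within a group, and the tie-breaking of the non-deviating men is left unchanged). Because the refined version of a man's true list is a linear extension of his genuine preference, any profitable report --- where profitable includes moving from unmatched to matched --- would be profitable with respect to the refined list as well, so the (group) strategy-proofness of strict MGS for proposers, which holds also for incomplete lists \cite{roth82,df81}, yields a contradiction. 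The delicate point is to check that the extra proposal round is still captured by a single strict instance, or else that strategy-proofness survives it by a direct Dubins--Freedman-style analysis; I would control this with an invariant tracking each man's proposal pointer.

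\textbf{Negative direction --- tightness.} For the lower bound I would construct, for every $\epsilon>0$, a finite family of MAX SMTI-1TM instances that pairwise differ only in the reported list of a single man, such that any man-strategy-proof stable mechanism is forced, on at least one of them, to output a matching of size at most $\frac{2}{3}$ of the optimum. The engine is a constant-size gadget --- a few men and women with a single tie in one man's list --- that possesses a stable matching of size $3$ but in which, once stability is imposed, the output sizes achievable across the manipulated variants are pinned down: if the mechanism ever returned the size-$3$ matching on one variant, some man would strictly prefer it to his outcome on the honest instance, violating strategy-proofness, so chasing the manipulations around the family forces a size-$2$ output somewhere. Taking $k$ disjoint copies and letting $k\to\infty$ absorbs the additive slack and rules out every ratio below $3/2$; the reason one cannot push this to $2$ as in Theorem~\ref{thm:SMTI} is precisely that the gadget is only allowed ties in men's lists. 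The hard part of this direction is engineering the gadget so that the stability constraints genuinely leave the mechanism no good option and so that it remains a legitimate SMTI-1TM instance.
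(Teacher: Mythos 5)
The crux of the positive half is missing from your proposal. You propose a GS-type algorithm in which a man's tie-groups are resolved ``by a globally computed rule,'' plus an extra proposal round, and then hope to transfer strategy-proofness from strict MGS by arguing that the resulting matching equals MGS on one strict instance and that the effective tie-breaking of non-deviating men is unchanged when one man lies. That is exactly the step that fails. Any tie-breaking adaptive enough to achieve ratio $1.5$ (prefer currently free women, Kir\'aly/McDermid/Paluch-style rules) depends on the state of the execution and hence on the other men's reported lists, so a single deviator can change how other men's ties are effectively broken and profit from the feedback; the paper gives a concrete four-man counterexample (Appendix~\ref{sec:counter-ex}) in which the manipulation is a swap of two \emph{strictly ranked} women, so your observation that permuting women inside his own tie cannot help the deviator does not address it. Conversely, a tie-breaking fixed independently of the run does inherit MGS's strategy-proofness, but that is Lemma~\ref{lm:positiveSMTI} and yields only a $2$-approximation. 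The paper resolves this tension with an idea absent from your sketch: a \emph{static} reduction to an SMI instance (Algorithm~\ref{alg:convi}) in which every woman $w_j$ is split into two copies $s_j,t_j$ guarded by an auxiliary man $b_j$ with list $s_j\ t_j$, and each tie $(w_{j_1}\cdots w_{j_k})$ in a man's list is expanded to $t_{j_1}\cdots t_{j_k}\,s_{j_1}\cdots s_{j_k}$; this encodes ``prefer a woman who would otherwise be single'' without run-time adaptivity, so each man's transformed list depends only on his own report, Dubins--Freedman applies verbatim to MGS on the transformed instance, and the gadget structure is what excludes length-$3$ augmenting paths. Without such a construction, your stability, no-augmenting-path, and strategy-proofness claims remain announcements of the difficulties rather than proofs, and the ``captured by a single strict instance'' claim for an algorithm with an extra proposal round is exactly the unproved (and, for the algorithms you describe, false) point.

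The negative half of your plan is in essence the paper's argument (a constant-size gadget with one tie in a man's list, maximum stable matching of size $3$, and a chase of single-man truncations showing that any $(1.5-\epsilon)$-approximate-stable mechanism is manipulable), and that works --- the paper's instance $I_3$ realizes it --- but you leave the gadget unconstructed, which is the entire content of that half. Also drop the $k$ disjoint copies: the guarantee is a per-instance multiplicative ratio, so forcing a size-$2$ output against optimum $3$ already refutes $(1.5-\epsilon)$-approximation for every $\epsilon>0$; worse, disjoint copies dilute the ratio (an output that is bad on one copy out of $k$ has ratio $3k/(3k-1)\to 1$), so the amplification step as stated would not rule out ratios below $3/2$ and should be discarded rather than repaired.
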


We remark that no assumptions on running times are made for our negative results, while algorithms in our positive results run in linear time.  Note also that the current best polynomial-time approximation algorithms for MAX SMTI and MAX SMTI-1TM have the approximation ratios better than those in our negative results (Theorems \ref{thm:SMTI} and \ref{thm:SMTI-1TM}).  Hence our results provide gaps between polynomial-time computation and strategy-proof computation.

\bigskip

\noindent
{\bf Coalition.} 
In the above discussion, the man-strategy-proofness (woman-strategy-proofness) is defined in terms of a manipulation of a preference list by one man (woman).  We can extend this notion to the {\em coalition} of men (or women) as follows; a coalition $C$ of men has a successful strategy if there is a way of falsifying preference lists of members of $C$ which improves the outcome of {\em every} member of $C$.  It is known that MGS is strategy-proof against the coalition of men in this sense (Theorem 1.7.1 of \cite{gi89}), and this strategy-proofness holds also in the stable marriage with incomplete lists (SMI) (page 57 of \cite{gi89}).  Since all our strategy-proofness results (Lemmas \ref{lm:positiveSMTI} and \ref{lm:strategy-proof}) are attributed to the strategy-proofness of MGS in SMI, we can easily modify the proofs so that Theorem \ref{thm:SMTI}, Corollary \ref{coro:SMTI-1TM}, and Theorem \ref{thm:SMTI-1TM} hold for strategy-proofness against coalitions.

\bigskip

\noindent
{\bf Overview of Techniques.}
Since MGS is a man-strategy-proof stable mechanism for SM, such types of algorithms are good candidates for proving the positive part of Theorem~\ref{thm:SMTI-1TM}.  Existing 1.5-approximation algorithms for MAX SMTI for one-sided ties are of GS-type, but in these algorithms, proposals are made from the side with no ties (women, in our case), so we cannot use them for our purpose.  As mentioned above, there are 1.5-approximation algorithms for the general MAX SMTI \cite{m09,p11,k13}, which are fortunately of GS-type and can handle proposals from the side with ties (men, in our case).  Hence one may expect that these algorithms will work.  However, it is not the case.  The main reason is as follows: Suppose that some man $m$ is going to propose to a woman, and the head of $m$'s current list is a tie, which is a mixture of unmatched and matched women.  In this case, $m$'s proposal will be sent to an unmatched woman, say $w$.  Suppose that, just one step before, another man $m'$ has proposed to $w'$.  Then if $m'$ moves $w$ to the position just before $w'$, he can make $w$ already matched when $m$ is about to propose to her, and as a result of this, $m$ does not propose to $w$ but to another unmatched woman.  In this way, a man can change another man's proposal order, which destroys the strategy-proofness (see Appendix \ref{sec:counter-ex} for more details).  To overcome it, we modify an existing 1.5-approximation algorithm to be {\em robust} in the sense that a man's proposal order is not affected by other men's preference lists.

\bigskip

\noindent
{\bf Ties {\em or} Incomplete Lists.}
When only ties are present (SMT) or only incomplete lists are present (SMI), all the stable matchings of one instance have the same cardinality.  The former is due to the fact that any stable matching is a perfect matching, and the latter is due to the Rural Hospitals theorem \cite{gs85}.  Hence approximability is not an important issue in these cases.  As for strategy-proofness, since SMT and SMI are generalizations of SM, Roth's impossibility theorem holds and no strategy-proof stable mechanism exists.  Existence of one-sided strategy-proofness for SMI is already known as we have mentioned in ``Coalition'' part above, and that for SMT follows directly from Theorem \ref{thm:SMTI}.

%\subsection{Related Work}

\section{Results for MAX SMTI}\label{sec:SMTIand1T}

In this section, we give a proof of Theorem \ref{thm:SMTI}.  We start with the positive part:

\begin{lemma}\label{lm:positiveSMTI}
MAX SMTI admits both a man-strategy-proof $2$-approximate-stable mechanism and a woman-strategy-proof $2$-approximate-stable mechanism.
\end{lemma}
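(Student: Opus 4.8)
The plan is to show that the man-oriented Gale–Shapley algorithm, suitably extended to handle ties and incomplete lists, simultaneously achieves strategy-proofness for men and a 2-approximation. The extension is the standard one: to run MGS on an SMTI instance, each man breaks ties in his own list arbitrarily (but in a fixed, deterministic way — say, by index) to obtain a strict list, and then we run ordinary MGS on the resulting SMI instance. Call this mechanism $S$. The woman-strategy-proof mechanism is obtained symmetrically by breaking ties in women's lists and running WGS.

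First I would verify that $S$ is a stable mechanism for MAX SMTI, i.e., that $S(I)$ is weakly stable for the original instance $I$. This is routine: the output is stable for the tie-broken SMI instance $I'$, and any blocking pair for weak stability in $I$ would also block in $I'$, since breaking ties only refines the strict preference and a weak blocking pair $(m,w)$ requires strict preference on both sides in $I$, which is preserved in $I'$. Second, I would establish man-strategy-proofness. The key point is that a man's tie-breaking is a function only of his own submitted list, so if man $m$ submits a falsified list, the resulting SMI instance differs from $I'$ only in $m$'s list. By the known man-strategy-proofness of MGS for SMI (cited in the excerpt, page 57 of \cite{gi89}), $m$ cannot improve his MGS-partner; and since his preference over outcomes in $I$ is exactly his preference in $I'$ (tie-breaking is consistent with the true weak order — a strictly better partner in $I$ is strictly better in $I'$, and going from single to matched is an improvement in both), no successful strategy exists. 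For coalitions, the same argument applies verbatim using the coalitional strategy-proofness of MGS for SMI.

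Third, the 2-approximation bound. Let $M = S(I)$ and let $M_{opt}$ be a maximum stable matching for $I$. Here I would use the standard augmenting-path argument for stable matchings: consider the symmetric difference $M \oplus M_{opt}$, which decomposes into paths and cycles, alternating between $M$- and $M_{opt}$-edges. A cycle or a path with equal numbers of edges from each contributes equally, so the only way $M_{opt}$ can beat $M$ is via augmenting paths with respect to $M$ — paths whose two endpoints are single in $M$ and matched in $M_{opt}$. The crucial claim is that no such path can have length $\geq$ the length giving ratio worse than $2$; concretely, there is no augmenting path of length $3$ (two $M_{opt}$-edges, one $M$-edge) that is "short", and more precisely one shows every connected component of $M \oplus M_{opt}$ containing $k$ edges of $M$ contains at most $2k$ edges of $M_{opt}$... actually the cleanest route: show there is no augmenting path consisting of a single $M_{opt}$-edge $(m,w)$ with both $m$ and $w$ single in $M$ — such a pair would block $M$. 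Then every $M_{opt}$-edge is incident to an $M$-edge, so $|M_{opt}| \le 2|M|$.

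The step I expect to be the main obstacle is getting the approximation analysis to interact correctly with the *weak* blocking-pair definition and with unmatched people: one has to be careful that "$m$ single in $M$ and $w$ single in $M$ and $(m,w) \in M_{opt}$" really does yield a blocking pair for $M$ under the weak-stability definition in the excerpt — this needs $m,w$ mutually acceptable (true, since $(m,w)\in M_{opt}$ and $M_{opt}$ is a matching in $I$) and conditions (ii),(iii) which hold trivially since both are single. So the argument goes through, but I would state it carefully. I would also need the tightness half of Theorem~\ref{thm:SMTI} (the $(2-\epsilon)$ impossibility), but that is a separate lemma not claimed here, so for Lemma~\ref{lm:positiveSMTI} the three steps above — stability, man-strategy-proofness via reduction to SMI, and the $2$-approximation via symmetric-difference/blocking-pair analysis — suffice, with the symmetric statements for the woman side following by exchanging the roles of men and women.
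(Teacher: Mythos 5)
Your proposal is correct and takes essentially the same approach as the paper: break ties by a fixed deterministic rule, run MGS on the resulting SMI instance, and reduce man-strategy-proofness to the known strategy-proofness of MGS for SMI, noting that the falsification changes only the manipulator's list in the tie-broken instance and that tie-breaking refines the true weak order. Two small remarks: in MAX SMTI the \emph{women's} lists may also contain ties, so the fixed tie-breaking must be applied to both sides (otherwise MGS is not even well defined and the instance is not SMI) — this is exactly what the paper does, and it is also what guarantees the non-manipulating side's lists coincide in $I'$ and $J'$; and where you prove the 2-approximation directly via the maximality/blocking-pair argument on $M \oplus M_{opt}$ (correctly), the paper simply cites the known fact that every weakly stable matching is a 2-approximate solution \cite{miimm02}.
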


\begin{proof}
Consider a mechanism $S^{*}$ that is described by the following algorithm.  Given a MAX SMTI instance $I$, $S^{*}$ first breaks each tie so that persons in a tie are ordered increasingly in their indices, that is, if $q_{i}$ and $q_{j}$ are in the same tie of $p$'s list, then after the tie break $q_{i} \succ_{p} q_{j}$ holds if and only if $i < j$.  (This ordering is only for the purpose of making the algorithm deterministic, so any {\em fixed} tie-breaking rule is valid.)  Let $I'$ be the resulting instance.  Its preference lists are incomplete but do not include ties; such an instance is called an {\em SMI instance}.  It then applies MGS modified for SMI \cite{gi89} to $I'$ and obtains a stable matching $M$ for $I'$.  It is easy to see that $M$ is stable for $I$.  Also it is well-known that in MAX SMTI, any stable matching is a 2-approximate solution \cite{miimm02}.  Hence $S^{*}$ is a 2-approximate-stable mechanism.

We then show that $S^{*}$ is a man-strategy-proof mechanism for MAX SMTI.  Suppose not.  Then there is a MAX SMTI instance $I$ and a man $m$ who has a successful strategy in $I$.  Let $J$ be a MAX SMTI instance in which only $m$'s preference list differs from $I$, and by using it $m$ obtains a better outcome.  Let $M_{I}$ and $M_{J}$ be the outputs of $S^{*}$ on $I$ and $J$, respectively.  Then $m$ prefers $M_{J}$ to $M_{I}$, that is, either (i) $M_{J}(m) \succ_{m} M_{I}(m)$ with respect to $m$'s true preference list in $I$, or (ii) $m$ is single in $M_{I}$ and matched in $M_{J}$, and $M_{J}(m)$ is acceptable to $m$ in $I$.  Let $I'$ and $J'$, respectively, be the SMI-instances constructed from $I$ and $J$ by breaking ties in the above mentioned manner.  Then $M_{I}$ and $M_{J}$ are, respectively, the results of MGS applied to $I'$ and $J'$.  Since $I'$ is the result of tie-breaking of $I$ and $m$ prefers $M_{J}$ to $M_{I}$ in $I$, $m$ prefers $M_{J}$ to $M_{I}$ in $I'$.  Note that, due to the tie-breaking rule, the preference lists of people except for $m$ are same in $I'$ and $J'$.  This means that when MGS is used in SMI, $m$ can have a successful strategy in $I'$ (i.e., to change his list to that of $J'$),contradicting to the man-strategy-proofness of MGS for SMI (page 57 of \cite{gi89}).

If we exchange the roles of men and women in $S^{*}$, we obtain a woman-strategy-proof 2-approximate-stable mechanism.
\end{proof}

We then show the negative part.  We remark that $\epsilon$ is not necessarily a constant.

\begin{lemma}\label{lm:negativeSMTI}
For any positive $\epsilon$, MAX SMTI admits neither a man-strategy-proof $(2-\epsilon)$-approximate-stable mechanism nor a woman-strategy-proof $(2-\epsilon)$-approximate-stable mechanism.
\end{lemma}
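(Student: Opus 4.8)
The plan is to prove the two lower bounds by exhibiting small, explicit families of MAX SMTI instances on which any man-strategy-proof (resp.\ woman-strategy-proof) $r$-approximate-stable mechanism is forced, by a chain of manipulation arguments, to output a stable matching so small that $r$ must be at least $2-\epsilon$. Since $\epsilon$ need not be constant, the construction will be parameterized by some integer $k$ and the ratio achieved will tend to $2$ as $k\to\infty$; to get a clean bound one typically takes $k$ roughly $1/\epsilon$.

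First I would set up the base gadget. Consider an instance with a small number of men and women in which there are exactly two stable matchings, a ``large'' one $M_{big}$ of size roughly $2t$ and a ``small'' one $M_{small}$ of size roughly $t$, with the property that each is the man-optimal (resp.\ woman-optimal) matching for a suitable profile obtained by having some agents truncate or permute their lists. The classic witness is: two men $m_1,m_2$ and two women $w_1,w_2$ where $m_1$ and $w_2$ are each other's second choice, $w_1$ is $m_1$'s first choice and $m_2$'s only choice, $m_1$ is $w_1$'s only choice, and so on, arranged so that matching $\{(m_1,w_1)\}$ (size $1$) and $\{(m_1,w_2),(m_2,w_1)\}$ (size $2$) are both stable; ties are used to make both stable under weak stability. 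The key combinatorial fact I need is that starting from the ``true'' profile, an agent who is single or badly matched in the size-$1$ stable matching can submit a falsified (e.g.\ truncated) list so that the unique stable matching of the new profile is the size-$2$ one, and vice versa.

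The heart of the argument is then a manipulation chain. Suppose $S$ is man-strategy-proof and returns a stable matching on every instance. On the ``true'' profile $I$, $S(I)$ is a stable matching, so it is either $M_{big}$ or $M_{small}$ (in the gadget these are the only ones). If $S$ ever returns the small matching on an instance where the large one is available, I would identify a man who is single (or matched to a less-preferred partner) in $S$'s output but who, by unilaterally falsifying his list, changes the profile to one whose only stable matching is the large one — and on that new profile $S$ must return it, since it is the unique stable matching. That man strictly prefers the outcome on the falsified profile, contradicting man-strategy-proofness. Hence $S$ is pinned to outputting the large matching on the true profile. By symmetry (swapping the roles used in the truncation), one builds a companion instance where the same pinning argument forces $S$ to output the \emph{small} matching, and the contradiction is that a single instance can be made to play both roles, or — more directly — one exhibits one instance on which man-strategy-proofness forces $S$ to return a matching of size $t$ while the optimum has size $2t-O(1)$, giving ratio $\to 2$. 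For the woman-strategy-proof case the construction is the mirror image, swapping men and women throughout. The main obstacle I anticipate is arranging the gadget so that (a) both matchings are weakly stable on the true profile, (b) the relevant falsified profiles have a \emph{unique} stable matching (so that $S$'s behavior is forced there regardless of which side it favors), and (c) the chain of unilateral deviations genuinely improves the deviating agent's partner under his/her true list at every step; getting all three simultaneously while scaling the size gap to $2-\epsilon$ is the delicate part, and I would handle the scaling by taking $k$ disjoint copies of the base gadget, or by a single gadget with $k$ ``short'' men competing for one woman, so the optimum has size $\Theta(k)$ and any manipulation-forced stable matching has size $\Theta(k)/2$.
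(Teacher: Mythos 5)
Your overall flavor (a tiny gadget plus truncation manipulations whose outcomes are forced) is the right one, but the plan as written has a genuine gap in the two places where it departs from what is actually needed. First, you misplace where the approximation hypothesis does its work. The clean argument uses a single fixed instance whose \emph{two} stable matchings both have \emph{maximum} size (e.g.\ $m_1: w_2\ w_1$; $m_2: w_2\ w_3$; $w_1: m_1$; $w_2: (m_1\ m_2)$; $w_3: m_2$, with stable matchings $\{(m_1,w_1),(m_2,w_2)\}$ and $\{(m_1,w_2),(m_2,w_3)\}$). Whichever of the two the mechanism outputs, one specific man truncates his list; in the truncated instance stable matchings of sizes $1$ and $2$ coexist, so a $(2-\epsilon)$-approximate-stable mechanism is forced to output the unique size-$2$ stable matching, which that man strictly prefers. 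Your version instead demands that the falsified profile have a \emph{unique} stable matching (so that any stable mechanism is forced) --- but in the natural gadgets the half-size stable matching survives the truncation, so uniqueness fails and you must invoke the $(2-\epsilon)$ guarantee exactly as above; and your alternative route, ``strategy-proofness forces $S$ to return a matching of size $t$ while the optimum is $2t-O(1)$,'' is never backed by a construction. Forcing a mechanism to output a \emph{small} stable matching by manipulation threats is the hard direction (a man matched in the large matching has no guaranteed profitable deviation --- after he truncates, the new instance typically still has a stable matching leaving him single, and nothing forces the mechanism away from it), and indeed the paper's positive result shows a man-strategy-proof $2$-approximate mechanism exists, so any such forcing must be extremely delicate; you give no instance on which every admissible output triggers a profitable deviation, which is the crux of the contradiction.

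Second, the scaling by $k\approx 1/\epsilon$ is both unnecessary and harmful. No family is needed: because the gap in the truncated instance is exactly a factor of $2$ (sizes $1$ versus $2$), the same constant-size instance works for every $\epsilon>0$, constant or not. Worse, taking $k$ disjoint copies breaks the forcing step: the approximation constraint is global, so after one man's deviation the mechanism may output the half-size stable matching in his copy and maximum ones elsewhere, achieving ratio $2k/(2k-1)\le 2-\epsilon$ once $k$ is large --- precisely when you want to use the copies --- so the deviator is no longer guaranteed to gain and the chain collapses. (Minor but worth fixing: your base gadget is internally inconsistent as stated, since $w_1$ lists only $m_1$ yet $(m_2,w_1)$ is supposed to belong to a stable matching; acceptability must be mutual.) To repair the proof, drop the scaling and the uniqueness requirement, build the fixed two-stable-matchings-of-equal-maximum-size instance above, and let the $(2-\epsilon)$ guarantee do the forcing on the truncated instances; the woman-side bound then follows from the mirrored instance.
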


\begin{proof}
Consider the instance $I_{1}$ given in Fig.~\ref{fig:input1}, where $m_{3}$'s preference list is empty.  It is straightforward to verify that $I_{1}$ has two stable matchings $M_{1} = \{ (m_{1}, w_{1}), (m_{2}, w_{2}) \}$ and $M_{2} = \{ (m_{1}, w_{2}), (m_{2}, w_{3}) \}$, both of which are of maximum size.  Hence any stable mechanism must output either $M_{1}$ or $M_{2}$.

\begin{figure}[htb]
\begin{center}
\begin{tabular}{ccccccccccccc}
$m_{1}$: & $w_{2}$ & $w_{1}$ & & \hspace{1mm} & $w_{1}$: & $m_{1}$ & & \\
$m_{2}$: & $w_{2}$ & $w_{3}$ & & \hspace{1mm} & $w_{2}$: & ($m_{1}$ & $m_{2}$) & \\
$m_{3}$: & & & & \hspace{1mm} & $w_{3}$: & $m_{2}$ & &
\end{tabular}
\caption{A MAX SMTI instance $I_{1}$}\label{fig:input1}
\end{center}
\end{figure}

First, consider an arbitrary $(2-\epsilon)$-approximate-stable mechanism $S$ for MAX SMTI that outputs $M_{1}$ on $I_{1}$.  Let $I'_{1}$ be the instance obtained from $I_{1}$ by deleting $w_{1}$ from $m_{1}$'s preference list.  Then since $M_{2}$ is still a stable matching for $I'_{1}$ and $S$ is a $(2-\epsilon)$-approximate-stable mechanism, $S$ must output a stable matching of size 2.  But since $M_{2}$ is now the only stable matching of size 2, $S$ outputs $M_{2}$ on $I'_{1}$.  Thus $m_{1}$ can obtain a better partner by manipulating his preference list.  Next, consider a $(2-\epsilon)$-approximate-stable mechanism $S$ that outputs $M_{2}$ on $I_{1}$.  Then let $I''_{1}$ be the instance obtained from $I_{1}$ by deleting $w_{3}$ from $m_{2}$'s preference list.  By a similar argument, $S$ must output $M_{1}$ on $I''_{1}$ and hence $m_{2}$ can obtain a better partner by manipulation.  We have shown that, for any $(2-\epsilon)$-approximate-stable mechanism $S$, some man has a successful strategy in $I_{1}$ and hence $S$ is not a man-strategy-proof mechanism.

Next we use the instance $I_{2}$ given in Fig.~\ref{fig:input2}, which is symmetric to $I_{1}$.  By the same argument as above, we can show that for any $(2-\epsilon)$-approximate-stable mechanism $S$, some woman has a successful strategy in $I_{2}$ and hence $S$ is not a woman-strategy-proof mechanism.

\begin{figure}[htb]
\begin{center}
\begin{tabular}{ccccccccccccc}
$m_{1}$: & $w_{1}$ & & & \hspace{1mm} & $w_{1}$: & $m_{2}$ & $m_{1}$ & \\
$m_{2}$: & ($w_{1}$ & $w_{2}$) & & \hspace{1mm} & $w_{2}$: & $m_{2}$ & $m_{3}$ & \\
$m_{3}$: & $w_{2}$ & & & \hspace{1mm} & $w_{3}$: & & &
\end{tabular}
\caption{A MAX SMTI instance $I_{2}$}\label{fig:input2}
\end{center}
\end{figure}

Therefore, we have shown that any $(2-\epsilon)$-approximate-stable mechanism is neither a man-strategy-proof mechanism nor a woman-strategy-proof mechanism.
\end{proof}

\section{Results for MAX SMTI-1TM}\label{sec:SMTI-1TM}

Recall that, in a MAX SMTI-1TM instance, ties can appear in men's lists only.  Since $I_{2}$ in the proof of Lemma \ref{lm:negativeSMTI} is a MAX SMTI-1TM instance, the second half of the proof applies also to MAX SMTI-1TM.  Consequently, together with Lemma \ref{lm:positiveSMTI}, we have Corollary \ref{coro:SMTI-1TM}.

We then move to man-strategy-proofness and give a proof for Theorem \ref{thm:SMTI-1TM}.  We start with the negative part:

\begin{lemma}\label{lm:negativeSMTI-1TM}
For any positive $\epsilon$, there is no man-strategy-proof $(1.5-\epsilon)$-approximate-stable mechanism for MAX SMTI-1TM.
\end{lemma}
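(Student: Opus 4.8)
The plan is to construct a small MAX SMTI-1TM instance (ties in men's lists only) together with a family of single-man preference-list manipulations, so that any $(1.5-\epsilon)$-approximate-stable mechanism is forced into a cycle of outputs that gives some man a successful strategy. Following the template of Lemma~\ref{lm:negativeSMTI}, I would look for an instance $I$ whose stable matchings all have size $2$ except for one ``large'' stable matching of size $3$ (so that $(1.5-\epsilon)$-approximability forces the mechanism to output that size-$3$ matching whenever it remains stable), but where there are several distinct size-$2$ stable matchings that the mechanism could output on $I$ itself. For each possible choice of a size-$2$ output on $I$, I would exhibit a manipulation by one man — most naturally truncating his list by deleting the partner he is currently assigned — that destroys the size-$2$ stable matchings the mechanism was using and leaves a unique size-$3$ stable matching (or at least one in which the manipulating man is strictly better off). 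Because the mechanism must then output that size-$3$ matching, the man improves, contradicting man-strategy-proofness.

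The key steps, in order, are: (i) fix a concrete instance $I$ on a small number of men and women (I would guess $3$ men and $3$ women suffice, mirroring $I_1$ and $I_2$, though one tie in some man's list is essential to separate this bound from the MAX~SMTI case) and enumerate all its stable matchings, checking that it has exactly one of size $3$ and a controlled set of smaller ones; (ii) for each candidate size-$2$ stable matching $M$ that a mechanism might output on $I$, identify a man $m$ who is single or poorly matched in $M$ but matched (better) in the size-$3$ matching; (iii) form $I'$ from $I$ by having $m$ delete from his list everyone he ranks at or below $M(m)$ (or delete $M(m)$ itself), verify that every size-$2$ stable matching of $I$ that assigns $m$ a ``bad'' partner is no longer stable for $I'$ while the size-$3$ matching survives, and that no new small stable matchings appear; (iv) conclude that on $I'$ the mechanism outputs a matching of size $3$, in which $m$ is strictly better off than in $M$, so $m$ has a successful strategy on $I$. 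Since the candidate outputs on $I$ are covered exhaustively, no $(1.5-\epsilon)$-approximate-stable mechanism can be man-strategy-proof.

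The main obstacle will be the case analysis in step~(ii)–(iii): the instance must be engineered so that \emph{every} size-$2$ stable matching of $I$ admits such a truncation manipulation simultaneously, i.e.\ the ``escape routes'' of the mechanism are all blocked at once. In the MAX~SMTI proof this was easy because $I_1$ had only two stable matchings and each gave exactly one man a natural truncation; here the $1.5$ barrier (rather than $2$) means the forced output jumps from size $2$ to size $3$, so the instance needs an extra man/woman and the stable-matching structure is richer, making it harder to ensure there is no size-$2$ stable matching of $I$ that is robust to all the relevant single-man deletions. I would also need to double-check that each manipulated instance $I'$ is still MAX~SMTI-1TM (deleting entries from a list preserves the one-sided-ties property) and that truncation genuinely leaves the size-$3$ matching stable — the deleted man must not be part of any blocking pair for it. The remaining verifications (listing stable matchings of each small instance) are routine.
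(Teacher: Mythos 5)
There is a genuine gap, and it is structural rather than a matter of missing detail: your plan inverts the roles of the small and large stable matchings. You propose an instance $I$ with one size-$3$ stable matching and several size-$2$ stable matchings, and you then analyze the cases in which the mechanism outputs one of the size-$2$ matchings on $I$, manipulating to force the size-$3$ matching. But by the very approximation bound you invoke, those cases never occur: if $I$ admits a stable matching of size $3$, any $(1.5-\epsilon)$-approximate-stable mechanism must already output a stable matching of size at least $3/(1.5-\epsilon) > 2$ on $I$ itself, i.e.\ the unique size-$3$ matching. So its output on $I$ is forced, your case analysis in steps (ii)--(iv) addresses only impossible outputs, and no contradiction with man-strategy-proofness is obtained; to get one you would have to show that some man can improve upon his partner in that forced size-$3$ matching, which your construction does not provide for.

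What is actually needed --- and what the paper does --- is an instance with \emph{two} maximum stable matchings of size $3$, so that the mechanism has a genuine choice at the top. The paper's $I_{3}$ uses four men and four women (one man with an empty list, one extra woman), with exactly two stable matchings of size $3$, $M_{3}$ and $M_{6}$; smaller stable matchings are excluded as outputs by the approximation guarantee. Whichever of $M_{3}$, $M_{6}$ the mechanism outputs, the man who receives his second-choice partner in it ($m_{1}$, resp.\ $m_{3}$) truncates his list by deleting that partner; this destroys the output matching while the other size-$3$ matching survives as the unique maximum stable matching of the manipulated instance, and the $(1.5-\epsilon)$ bound forces the mechanism to switch to it, giving the manipulator a strictly better partner. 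Your truncation idea and the use of the approximation bound to rule out size-$2$ outputs after manipulation are exactly the right ingredients, but they must be applied to a base instance with (at least) two maximum stable matchings; also note that your guess of three men and three women is not what the paper uses, and since no concrete instance is exhibited, the plan is incomplete even apart from the inversion.
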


\begin{proof}
The proof goes like that of Lemma \ref{lm:negativeSMTI}.  Consider the instance $I_{3}$ in Fig.~\ref{fig:input3}.  
$I_{3}$ has four matchings of size 3, namely, $M_{3}= \{ (m_{1}, w_{1}), (m_{2}, w_{2}), (m_{3}, w_{3}) \}$, $M_{4}= \{ (m_{1}, w_{1}), (m_{2}, w_{2}), (m_{3}, w_{4}) \}$, $M_{5}= \{ (m_{1}, w_{1}), (m_{2}, w_{3}), (m_{3}, w_{4}) \}$, and $M_{6}= \{ (m_{1}, w_{2}), (m_{2}, w_{3}), (m_{3}, w_{4}) \}$.
Among them, $M_{3}$ and $M_{6}$ are stable ($M_{4}$ is blocked by $(m_{3}, w_{3})$ and $M_{5}$ is blocked by $(m_{1}, w_{2})$).  Hence any $(1.5-\epsilon)$-approximate-stable mechanism outputs either $M_{3}$ or $M_{6}$, since a stable matching of size 2 is not a $(1.5-\epsilon)$-approximate solution.

\begin{figure}[htb]
\begin{center}
\begin{tabular}{ccccccccccccc}
$m_{1}$: & $w_{2}$ & $w_{1}$ & & \hspace{1mm} & $w_{1}$: & $m_{1}$ & & \\
$m_{2}$: & ($w_{2}$ & $w_{3}$) & & \hspace{1mm} & $w_{2}$: & $m_{2}$ & $m_{1}$ & \\
$m_{3}$: & $w_{3}$ & $w_{4}$ & & \hspace{1mm} & $w_{3}$: & $m_{2}$ & $m_{3}$ & \\
$m_{4}$: & & & & \hspace{1mm} & $w_{4}$: & $m_{3}$ & &
\end{tabular}
\caption{A MAX SMTI-1TM instance $I_{3}$}\label{fig:input3}
\end{center}
\end{figure}

Consider an arbitrary $(1.5-\epsilon)$-approximate-stable mechanism $S$ for MAX SMTI-1TM, and suppose that  $S$ outputs $M_{3}$ on $I_{3}$.  Then if $m_{1}$ deletes $w_{1}$ from the list, $M_{6}$ is the unique maximum stable matching (of size 3); hence $S$ must output $M_{6}$ and so $m_{1}$ can obtain a better partner $w_{2}$.  Similarly, if $S$ outputs $M_{6}$ on $I_{3}$, $m_{3}$ can force $S$ to output $M_{3}$ by deleting $w_{4}$ from the list.  In either case, some man has a successful strategy in $I_{3}$ and hence $S$ is not a man-strategy-proof mechanism.
\end{proof}

Finally, we give a proof for the positive part, which is the main result of this paper.

\begin{lemma}\label{lm:positiveSMTI-1TM}
There exists a man-strategy-proof $1.5$-approximate-stable mechanism for MAX SMTI-1TM.
\end{lemma}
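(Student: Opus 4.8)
The plan is to construct a GS-type mechanism in which men (the side with ties) make proposals, but whose proposal dynamics are made \emph{robust}: the order in which a given man considers his acceptable women should depend only on his own preference list (with ties broken by a fixed rule) and never on the lists of other men. Concretely, I would start from one of the known $1.5$-approximation algorithms for MAX SMTI with one-sided ties that is of GS-type with proposals from the tied side (the algorithms of \cite{m09,p11,k13}), and identify the one place where a man's behavior is ``adaptive'' — namely, when the head of his current list is a tie mixing matched and unmatched women, the algorithm sends his proposal preferentially to an unmatched one. As the overview paragraph already explains (and as Appendix~\ref{sec:counter-ex} demonstrates), this adaptivity is exactly what another man can exploit to reroute a proposal, breaking man-strategy-proofness. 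So the first step is to \textbf{redesign the tie-handling}: I would have each man linearize his list according to a rule that uses only his own list — for instance, within each tie try the women in a fixed sub-order, but crucially augment this with the standard $1.5$-approximation trick of allowing a man to ``promote'' within a tie only in a way dictated by his own list structure (e.g.\ two-pass / two-round proposals as in Kir\'aly's algorithm, where an unmatched man gets a ``second chance'' and is allowed to displace a man who is matched to a woman via a tie). The key is that all of this can be phrased purely in terms of a \emph{sequence} $\sigma_m$ of (woman, priority-level) pairs that man $m$ walks down, with $\sigma_m$ a function of $m$'s list alone.

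Once the mechanism is recast this way, I would prove the three required properties in turn. (1) \textbf{Correctness / stability}: show the output is always a weakly stable matching for the original SMTI-1TM instance — this should be essentially inherited from the base $1.5$-approximation algorithm, since the modifications only restrict which tie-member a man proposes to first, and one verifies no blocking pair survives. (2) \textbf{Approximation ratio $1.5$}: here I would reuse the charging/flooding argument from \cite{k13} (or \cite{m09,p11}), showing that on the symmetric difference of the output $M$ and an optimal stable matching $M_{opt}$, every augmenting path has length $\le 5$ (i.e.\ relative to $M$ there is no $M$-augmenting path of $3$ edges, which would need a length-$7$ component); the two-round structure is precisely what rules out length-$3$ augmenting paths, and the robustness change must be checked not to reintroduce them. (3) \textbf{Man-strategy-proofness}: this is where I would cash in the robustness. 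The idea, mirroring the proof of Lemma~\ref{lm:positiveSMTI}, is to exhibit an \emph{equivalent} SMI instance: replace each man $m$'s list by the strict list obtained from $\sigma_m$ (each priority level becoming a genuine preference rank, with a woman possibly appearing at two ranks collapsed to her better occurrence), keep women's lists as is, and argue that running the robust mechanism on the SMTI-1TM instance produces exactly the matching MGS produces on this derived SMI instance. Because $\sigma_m$ depends only on $m$'s own list, a falsification by $m$ changes only $m$'s derived list and nobody else's; man-strategy-proofness of MGS for SMI (\cite{gi89}, p.~57) then transfers, after one checks that ``$m$ prefers $M'$ to $M$'' with respect to his true SMTI-1TM list implies the same with respect to his derived SMI list.

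The main obstacle I anticipate is getting the robust tie-handling \emph{simultaneously} to preserve the $1.5$ ratio and to admit the clean reduction to an SMI instance. The tension is real: the $1.5$-approximation algorithms gain their ratio precisely by having unmatched men adaptively displace men who are ``loosely'' matched (matched via a tie), and it is tempting to think this adaptivity cannot be removed. The resolution I would pursue is to make the ``second chance'' itself part of the fixed sequence $\sigma_m$ — i.e.\ man $m$'s derived list contains, after his first pass over his real list, a second copy of (the tied portions of) his list at lower priority, so that a man who was rejected everywhere on the first pass automatically re-proposes at the lower priority level and, on the women's side, a woman holding a ``first-pass/tie'' proposal treats a ``first-pass/strict'' or any ``second-pass'' offer appropriately. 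The delicate part is choosing the women's tie-breaking / acceptance rule on these duplicated levels so that (a) the resulting object is a legitimate SMI instance (women still have strict lists — this is automatic since women's lists were strict to begin with, but the \emph{men's} derived lists must be consistent with a single run of MGS), and (b) the flooding argument still goes through. I would verify (b) by re-running the augmenting-path case analysis of \cite{k13} line by line against the derived-instance execution, since the derived-instance execution is, by construction, identical to the robust mechanism's execution on the original instance. Everything else — stability, the equivalence of the two executions, and the strategy-proofness transfer — is then routine bookkeeping of the kind already carried out in the proof of Lemma~\ref{lm:positiveSMTI}.
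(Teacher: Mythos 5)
Your overall plan---make each man's proposal order a function of his own list alone, realize the whole mechanism as a single run of MGS on a derived SMI instance, and transfer man-strategy-proofness from MGS for SMI \cite{gi89} exactly as in Lemma~\ref{lm:positiveSMTI}---is the same strategy the paper follows. But the proposal stops short of the one idea that makes this strategy executable, and the place where it stops is precisely the point you yourself flag as the ``delicate part.'' A two-pass structure encoded only on the men's side cannot be a legitimate SMI instance: if each woman occurs once in a man's derived list and the women's lists are kept as is, the mechanism is just MGS after a fixed tie-breaking, which is only $2$-approximate (this is exactly the mechanism of Lemma~\ref{lm:positiveSMTI}, and a two-man example with $m_1$: $(w_1\ w_2)$, $m_2$: $w_1$ already shows ratio $2$); if a woman appears twice in a man's list, the object is no longer an SMI instance, and ``collapsing to her better occurrence'' erases the second pass; and having a woman ``treat a first-pass/tie proposal appropriately'' is not expressible inside plain MGS with strict, unmodified women's lists---once you modify the acceptance rule you are no longer running MGS, and the appeal to its strategy-proofness (and your claimed equivalence of executions) no longer applies.

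The paper's resolution is a concrete gadget you would still need to invent: each woman $w_j$ is split into two women $s_j$ and $t_j$ together with a dummy man $b_j$ whose list is ``$s_j\ t_j$'', with $s_j$ ranking the real men (in $w_j$'s order) above $b_j$ and $t_j$ ranking $b_j$ above the real men, while each tie $(w_{j_1}\cdots w_{j_k})$ in a man's list is expanded to the strict sequence $t_{j_1}\cdots t_{j_k}\,s_{j_1}\cdots s_{j_k}$ (Algorithm~\ref{alg:convi}). The dummy man $b_j$ is what lets the \emph{women's} side distinguish the two passes within a genuine SMI instance, and the $1.5$ bound is then proved not by porting Kir\'aly's analysis line by line but by a short self-contained argument on $M\triangle M_{opt}$: a length-$3$ augmenting path $m_i\!-\!w_j\!-\!m_k\!-\!w_\ell$ would force $w_j,w_\ell$ to be tied in $m_k$'s list, and then the facts that $b_\ell$ is never single (so $t_\ell$ is single when $w_\ell$ is) and that $t_\ell \succ_{a_k} s_j$ produce a blocking pair for $M'$, contradicting stability under MGS. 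So the gap is not in your high-level architecture but in the missing construction; without it (or an equivalent device), none of your three verification steps can actually be carried out.
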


\begin{proof}
We give Algorithm~\ref{alg:pk6} and show that it is a man-strategy-proof $1.5$-approximate-stable mechanism by three subsequent lemmas (Lemmas \ref{lm:stability}--\ref{lm:strategy-proof}).
Algorithm~\ref{alg:pk6} first translates an SMTI-1TM instance $I$ to an SMI instance $I'$ using Algorithm~\ref{alg:convi}, then applies MGS to $I'$ and obtains a matching $M'$, and finally constructs a matching $M$ of $I$ from $M'$.  It is worth noting that a man $b_{j}$ created at Step \ref{step:convi6} of Algorithm~\ref{alg:convi} corresponds to a woman (not a man) of $I$.  As will be seen later, $b_{j}$ is definitely matched with $s_{j}$ or $t_{j}$ in $M'$, and the other woman (i.e., either $s_{j}$ or $t_{j}$ who is not matched with $b_{j}$) plays a role of woman $w_{j}$ of $I$: If she is single in $M'$, then $w_{j}$ is single in $M$.  If she is matched with $a_{i}$ in $M'$, then $w_{j}$ is matched with $m_{i}$ in $M$.

% We give Algorithm~\ref{alg:pk6} and show that it is a man-strategy-proof $1.5$-approximate-stable mechanism by three subsequent lemmas.
% Algorithm~\ref{alg:pk6} first translates a MAX SMTI-1TM instance $I$ to an SMI instance $I'$ using Algorithm~\ref{alg:convi}, then applies MGS to $I'$ and obtains a matching $M'$, and finally constructs a matching $M$ of $I$ from $M'$. 

 \begin{algorithm}[htb]
  \caption{An algorithm for MAX SMTI-1TM}\label{alg:pk6}
  \begin{algorithmic}[1]
   \REQUIRE An instance $I$ for MAX SMTI-1TM.
   \ENSURE A matching $M$ for $I$.
   \STATE Construct an SMI instance $I'$ from $I$ using Algorithm~\ref{alg:convi}.
   \STATE Apply MGS to $I'$ and obtain a matching $M'$.\label{step:pk6gs}
   \STATE Let $M := \{(m_{i},w_{j}) \mid (a_{i}, s_{j}) \in M' \vee (a_{i}, t_{j}) \in M' \}$ and output $M$.
   \label{step:pk6-m}
  \end{algorithmic}
 \end{algorithm}
 
 \begin{algorithm}[htb]
  \caption{Translating instances}\label{alg:convi}
  \begin{algorithmic}[1]
   \REQUIRE An instance $I$ for MAX SMTI-1TM.
   \ENSURE An instance $I'$ for SMI.
   \STATE Let $X$ and $Y$ be the sets of men and women of $I$, respectively.
   \STATE Let $X':= \{a_{i} \mid m_{i} \in X\} \cup \{b_{j} \mid w_{j} \in Y\}$ be the set of men of $I'$.\label{step:convi6}
   \STATE Let $Y':= \{s_{j} \mid w_{j}\in Y\} \cup \{t_{j} \mid w_{j} \in Y\}$ be the set of women of $I'$.
   \STATE Each $a_{i}$'s list is constructed as follows: Consider a tie $(w_{j_1} \ w_{j_2} \ \cdots \ w_{j_k})$ in $m_{i}$'s list in $I$.  We assume without loss of generality that $j_1 < j_2 < \cdots < j_k$.  (If not, just arrange the order, which does not change the instance.)  Replace each tie $(w_{j_1} \ w_{j_2} \ \cdots \ w_{j_k})$ by a strict order of $2k$ women $t_{j_1} \ t_{j_2} \ \cdots \ t_{j_k} \ s_{j_1} \ s_{j_2} \ \cdots \ s_{j_k}$.  
A woman who is not included in a tie is considered as a tie of length one.  
%Note that it suffices to replace $w_{j}$ with $\widehat{w_{j}}$ instead of $\begin{array}{cccccccccccccccc}\widehat{w'_{j}} & \widehat{w_{j}}\end{array}$; however, 
%the above-mentioned redundant conversion is used for simplicity of the proof.
   \STATE Each $b_{j}$'s list is defined as ``$b_{j}: s_{j} \ \ t_{j}$''. 
   \STATE For each $j$, let $P(w_{j})$ be the list of $w_{j}$ in $I$, and $Q(w_{j})$ be the list obtained from $P(w_{j})$ by replacing each man $m_{i}$ by $a_{i}$.  Then $s_{j}$ and $t_{j}$'s lists are defined as follows:
\[
\renewcommand\arraystretch{1.2}
   \begin{array}{cccccccccccccccc}
s_{j}: & Q(w_{j}) & b_{j} \\
t_{j}: & b_{j} & Q(w_{j})
\end{array}
   \]
  \end{algorithmic}
 \end{algorithm}

Now we start formal proofs for the correctness.
 
 \begin{lemma}\label{lm:stability}
  Algorithm~\ref{alg:pk6} always outputs a stable matching.
 \end{lemma}

 \begin{proof}
  Let $M$ be the output of Algorithm~\ref{alg:pk6} and $M'$ be the matching obtained at Step \ref{step:pk6gs} of Algorithm~\ref{alg:pk6}.  We first show that $M$ is a matching.
 Since $M'$ is a matching, $a_{i}$ appears at most once in $M'$, so $m_{i}$ appears at most once in $M$.
Observe that $b_{j}$ is matched in $M'$, as otherwise $(b_{j}, t_{j})$ blocks $M'$, contradicting the stability of $M'$ in $I'$.  Hence at most one of $s_{j}$ and $t_{j}$ can be matched with $a_{i}$ for some $i$, which implies that $w_{j}$ appears at most once in $M$.  Thus $M$ is a matching.

  We then show the stability of $M$.  
  Since $M'$ is the output of MGS, it is stable in $I'$.
  Now suppose that $M$ is unstable in $I$ and there is a blocking pair $(m_{i},w_{j})$ for $M$. There are four cases:
  \begin{description}
   \item[\boldmath Case (i): both $m_{i}$ and $w_{j}$ are single.]
	      Since $m_{i}$ is single in $M$, Step~\ref{step:pk6-m} of Algorithm~\ref{alg:pk6} implies that $a_{i}$ is single in $M'$.
%
%           As seen in the first part of this proof, $b_{j}$ can never be single in $M'$.  If $b_{j}$ is matched with $s_{j}$, then $(a_{i}, s_{j})$ blocks $M'$, since $(a_{i}, s_{j})$ is a mutually acceptable pair in $I'$ and $a_{i} \succ_{s_{j}} b_{j} = M'(s_{j})$, a contradiction.  Hence suppose that $b_{j}$ is matched with $t_{j}$.
%Since $w_{j}$ is single in $M$, $s_{j}$ is not matched in $M'$ with anyone in $Q(w_{j})$, implying that $s_{j}$ is single in $M'$.  Thus $(a_{i}, s_{j})$ blocks $M'$, a contradiction.
Since $w_{j}$ is single in $M$, $s_{j}$ is not matched in $M'$ with anyone in $Q(w_{j})$, i.e., $s_{j}$ is single or matched with $b_{j}$.  Note that $(a_{i}, s_{j})$ is a mutually acceptable pair because $(m_{i},w_{j})$ is a blocking pair, and $a_{i} \succ_{s_{j}} b_{j}$ in $I'$ by construction.  Thus $(a_{i}, s_{j})$ blocks $M'$, a contradiction.

   \item[\boldmath Case (ii): $w_{j} \succ_{m_{i}} M(m_{i})$ and $w_{j}$ is single.]
	      Let $M(m_{i}) = w_{k}$.  Then, by construction of $M$, $M'(a_{i})$ is either $s_{k}$ or $t_{k}$.
By construction of $I'$, $w_{j} \succ_{m_{i}} w_{k}$ implies both $s_{j} \succ_{a_{i}} s_{k}$ and $s_{j} \succ_{a_{i}} t_{k}$, and in either case we have that $s_{j} \succ_{a_{i}} M'(a_{i})$ in $I'$.
	      Since $w_{j}$ is single in $M$, by the same argument as Case (i), $s_{j}$ is either single or matched with $b_{j}$ in $M'$.  Hence $(a_{i}, s_{j})$ blocks $M'$.

   \item[\boldmath Case (iii): $m_{i}$ is single and $m_{i} \succ_{w_{j}} M(w_{j})$.]
	      Since $m_{i}$ is single in $M$, $a_{i}$ is single in $M'$ by the same argument as Case (i).
	      Let $M(w_{j})=m_{k}$.  Then, by construction of $M$, either $s_{j}$ or $t_{j}$ is matched with $a_{k}$, and the other is matched with $b_{j}$ since $b_{j}$ can never be single as we have seen in an earlier stage of this proof.  That is, $M'(s_{j})$ is either $a_{k}$ or $b_{j}$.
Note that $m_{i} \succ_{w_{j}} m_{k}$ in $P(w_{j})$ implies $a_{i} \succ_{s_{j}} a_{k}$ in $Q(w_{j})$, so in either case $a_{i} \succ_{s_{j}} M'(s_{j})$ in $I'$.
	      Therefore $(a_{i}, s_{j})$ blocks $M'$.
   \item[\boldmath Case (iv): $w_{j} \succ_{m_{i}} M(m_{i})$ and $m_{i} \succ_{w_{j}} M(w_{j})$.]
By the same argument as Case (ii), we have that $s_{j} \succ_{a_{i}} M'(a_{i})$ in $I'$.  By the same argument as Case (iii), we have that $a_{i} \succ_{s_{j}} M'(s_{j})$ in $I'$.  Hence $(a_{i}, s_{j})$ blocks $M'$.
  \end{description}
 \end{proof}

 \begin{lemma}\label{lm:1.5}
  Algorithm~\ref{alg:pk6} always outputs a 1.5-approximate solution.
 \end{lemma}

 \begin{proof}
  Let $I$ be an input, $M_{opt}$ be a maximum stable matching for $I$, and $M$ be the output of Algorithm~\ref{alg:pk6}.
  We show that $\frac{|M_{opt}|}{|M|} \leq 1.5$.
  Let $G=(X \cup Y,E)$ be a bipartite (multi-)graph with vertex bipartition $X$ and $Y$, where $X$ corresponds to men and $Y$ corresponds to women of $I$.
  The edge set $E$ is a union of $M$ and $M_{opt}$, that is, $(m_{i},w_{j}) \in E$ if and only if $(m_{i},w_{j})$ is a pair in $M$ or $M_{opt}$.
  If $(m_{i},w_{j})$ is a pair in both $M$ and $M_{opt}$, then $E$ contains two edges $(m_{i},w_{j})$, which constitute a ``cycle'' of length two.  
  An edge in $E$ corresponding to $M$ (resp. $M_{opt}$) is called an $M$-edge (resp. $M_{opt}$-edge).  
  Since the degree of each vertex of $G$ is at most 2, each connected component of $G$ is an isolated vertex, a cycle, or a path.

  It is easy to see that $G$ does not contain a single $M_{opt}$-edge as a connected component, since if such an edge $(m_{i}, w_{j})$ exists, then $(m_{i}, w_{j})$ is a blocking pair for $M$, contradicting the stability of $M$. 
  In the following, we show that $G$ does not contain, as a connected component, a path of length three $m_{i}-w_{j}-m_{k}-w_{\ell}$ such that $(m_{i},w_{j})$ and $(m_{k},w_{\ell})$ are $M_{opt}$-edges and $(m_{k},w_{j})$ is an $M$-edge.  If this is true, then for any connected component $C$ of $G$, the number of $M$-edges in $C$ is at least two-thirds of the number of $M_{opt}$-edges in $C$, implying $\frac{|M_{opt}|}{|M|} \leq 1.5$.
 
  Suppose that such a path exists.
  Note that $m_{i}$ and $w_{\ell}$ are single in $M$.
  If $m_{i} \succ_{w_{j}} m_{k}$, then $(m_{i}, w_{j})$ blocks $M$.  Since women's preference lists do not contain ties, we have that $m_{k} \succ_{w_{j}} m_{i}$.  If $w_{\ell} \succ_{m_{k}} w_{j}$, then $(m_{k}, w_{\ell})$ blocks $M$.  If $w_{j} \succ_{m_{k}} w_{\ell}$, then $(m_{k}, w_{j})$ blocks $M_{opt}$.  Hence $w_{j}$ and $w_{\ell}$ are tied in $m_{k}$'s list.
  Then by construction of $I'$, (i) $t_{\ell} \succ_{a_{k}} s_{j}$.  (Hereafter, referring to Fig.~\ref{fig:part} would be helpful. Here, the order of $t_{j}$ and $t_{\ell}$ in $a_{k}$'s list is uncertain, i.e., it may be the opposite, but this order is not important in the rest of the proof.)
  Since $w_{\ell}$ is single in $M$, either $s_{\ell}$ or $t_{\ell}$ is single in $M'$.
  If $s_{\ell}$ is single in $M'$, then $(b_{\ell}, s_{\ell})$ blocks $M'$, a contradiction.  Hence (ii)  $t_{\ell}$ is single in $M'$.
  Since $M(m_{k}) = w_{j}$, either $M'(a_{k}) = s_{j}$ or $M'(a_{k}) = t_{j}$ holds.
  In the former case, (i) and (ii) above imply that $(a_{k}, t_{\ell})$ blocks $M'$,
  so assume the latter, i.e., $M'(a_{k}) = t_{j}$.
  Recall that either $s_{j}$ or $t_{j}$ is matched with $b_{j}$ in $M'$, so $M'(s_{j}) = b_{j}$.
  Since $(m_{i},w_{j})$ is an acceptable pair in $I$, we have that $a_{i} \succ_{s_{j}} b_{j}$.
  Since $m_{i}$ is single in $M$, $a_{i}$ is single in $M'$.  Hence 
  $(a_{i}, s_{j})$ blocks $M'$, a contradiction.

\begin{figure}[htb]
\begin{center}
\begin{tabular}{llllllllllllll}
$a_{i}$: & $\cdots$ \ $s_{j}$ \ $\cdots$ & \hspace{10mm} & $s_{j}$: & $\cdots$ \ $a_{i}$ \ $\cdots$ \ $b_{j}$ \\
$b_{i}$: & $s_{i}$ \ \ $t_{i}$ & \hspace{1mm} & $t_{j}$: & $b_{j}$ \ $\cdots$ \ $a_{k}$ \ $\cdots$  \\
&&&&\\
$a_{k}$: & $\cdots$ \ $t_{j}$ \ $\cdots$ \ $t_{\ell}$ \ $\cdots$ \ $s_{j}$ \ $\cdots$ & \hspace{1mm} & $s_{\ell}$: & $\cdots$ \ $b_{\ell}$ \\
$b_{k}$: & $s_{k}$ \ \ $t_{k}$ & \hspace{1mm} & $t_{\ell}$: & $b_{\ell}$ \ $\cdots$ \\
&&&&\\
$a_{\ell}$: & $\cdots$ \\
$b_{\ell}$: & $s_{\ell}$ \ \ $t_{\ell}$
\end{tabular}
\caption{A part of the preference lists of $I'$}\label{fig:part}
\end{center}
\end{figure}
 \end{proof}
 
 \begin{lemma}\label{lm:strategy-proof}
  Algorithm~\ref{alg:pk6} is a man-strategy-proof mechanism.
 \end{lemma}

 \begin{proof}
  The proof is similar to that of Lemma \ref{lm:positiveSMTI}.  Suppose that Algorithm~\ref{alg:pk6} is not a man-strategy-proof mechanism.
  Then there are MAX SMTI-1TM instances $I$ and $J$ and a
  man $m_{i}$ having the following properties: $I$ and
  $J$ differ in only $m_{i}$'s preference list, and $m_{i}$ prefers $M_{J}$ to $M_{I}$, 
  where $M_{I}$ and $M_{J}$ are the outputs of Algorithm~\ref{alg:pk6} for $I$ and $J$, respectively.
  Then either (i) $M_{J}(m_{i}) \succ_{m_{i}} M_{I}(m_{i})$ in $I$, or (ii) $m_{i}$ is single in $M_{I}$ and $M_{J}(m_{i})$ is acceptable to $m_{i}$ in $I$.

 Let $I'$ and $J'$ be the SMI-instances constructed by Algorithm~\ref{alg:convi}.
Since $I$ and $J$ differ in only $m_{i}$'s preference list, $I'$ and $J'$ differ in only $a_{i}$'s preference list.
 Let $M_{I'}$ and $M_{J'}$, respectively, be the outputs of MGS applied to $I'$ and $J'$.  In case of (i), we have that $M_{J'}(a_{i}) \succ_{a_{i}} M_{I'}(a_{i})$ in $I'$, due to Algorithm~\ref{alg:convi} and Step~\ref{step:pk6-m} of Algorithm~\ref{alg:pk6}.  In case of (ii), $a_{i}$ is single in $M_{I'}$ because $m_{i}$ is single in $M_{I}$, and $M_{J'}(a_{i})$ is acceptable to $a_{i}$ in $I'$ because $M_{J}(m_{i})$ is acceptable to $m_{i}$ in $I$.  This implies that $a_{i}$ has a successful strategy in $I'$, contradicting to the man-strategy-proofness of MGS for SMI \cite{gi89}.
 \end{proof}

By Lemmas \ref{lm:stability}, \ref{lm:1.5}, and \ref{lm:strategy-proof}, we can conclude that Algorithm~\ref{alg:pk6} is a man-strategy-proof 1.5-approximate-stable mechanism for MAX SMTI-1TM.
\end{proof}

\newpage

\appendix

\section{The Man-Oriented Gale-Shapley Algorithm}\label{sec:MGS}

During the course of the algorithm, each person takes one of two states ``free'' and  ``engaged''.  At the beginning, everyone is free and the matching $M$ is initialized to the empty set.  At one step of the algorithm, an arbitrary free man $m$ proposes to the top woman $w$ in his current list.  If $w$ is free, then $m$ and $w$ are provisionally matched and $(m,w)$ is added to $M$.  If $w$ is engaged and matched with $m'$, then $w$ compares $m$ and $m'$, takes the preferred one, and rejects the other.  The rejected man deletes $w$ from the list and becomes (or keeps to be) free.  When there is no free man, the matching $M$ is output.  The pseudo-code is given in Algorithm \ref{fig:MGS}.

\begin{algorithm}[htb]
\caption{The man-oriented Gale-Shapley algorithm}\label{fig:MGS}
 \begin{algorithmic}[1]
  \STATE Let $M:=\emptyset$ and all people be free.
  \WHILE{there is a free man whose preference list is non-empty}
  \STATE Let $m$ be any free man.
  \STATE Let $w$ be the woman at the top of $m$'s (current) list.
  \IF{$w$ is free}
  \STATE Let $M:= M \cup \{(m,w) \}$, and $m$ and $w$ be engaged.
  \ENDIF
  \IF{$w$ is engaged}
  \STATE Let $m'$ be $w$'s partner.
  \IF{$w$ prefers $m'$ to $m$}
  \STATE Delet $w$ from $m$'s list.
  \ELSE
  \STATE Let $M:= M \cup \{(m,w) \} \setminus \{(m',w) \}$.
  \STATE Let $m'$ be free and $m$ be engaged.
  \STATE Delete $w$ from $m'$'s list.
  \ENDIF
  \ENDIF
  \ENDWHILE
  \STATE Output $M$.
 \end{algorithmic}
\end{algorithm}

\section{Non-Strategy-Proofness of Existing 1.5-approximation Algorithms for MAX SMTI-1TM}\label{sec:counter-ex}

Kir\'aly \cite{k13} presented a 1.5-approximation algorithm for general MAX SMTI (i.e., ties can appear on both sides).  We modify it in the following two respects.

\begin{enumerate}
\item Men's proposals do not get into the 2nd round.

\item When there is arbitrarity, the person with the smallest index is prioritized. 
\end{enumerate}

\noindent
Ideas behind these modifications are as follows: For item 1, since there is no ties in women's preference lists, executing the 2nd round does not change the result.  The role of item 2 is to make the algorithm deterministic, so that the output is a function of an input (as we did in the proof of Lemma~\ref{lm:positiveSMTI}).  For completeness, we give a pseudo-code of the algorithm in Algorithm~\ref{fig:ALGO-NA}.  This algorithm is called ``New Algorithm'' in \cite{k13}, so we abbreviate it as NA here.

Each person takes one of three states, ``free'',  ``engaged'', and ``semi-engaged''.  Initially, all the persons are free.  At lines \ref{algo:propose1}, \ref{algo:propose2}, and \ref{algo:propose3}, man $m$ proposes to woman $w$.  Basically, the procedure is exactly the same as that of MGS.  If $w$ is free, then we let $M:= M \cup \{(m,w) \}$ and both $m$ and $w$ be engaged (we say $w$ {\em accepts} $m$).  If $w$ is engaged to $m'$ (i.e., $(m', w) \in M$) and if $m \succ_{w} m'$, then we let $M:= M \cup \{(m,w) \} \setminus \{(m',w) \}$, $m$ be engaged, and $m'$ be free.  We also delete $w$ from $m'$'s preference list (we say $w$ {\em accepts} $m$ and {\em rejects} $m'$).  If $w$ is engaged to $m'$ and $m' \succ_{w} m$, then we delete $w$ from $m$'s preference list (we say $w$ {\em rejects} $m$).

\begin{algorithm}[htb]
\caption{Kir\'aly's New Algorithm (NA) \cite{k13}}\label{fig:ALGO-NA}
 \begin{algorithmic}[1]
  \STATE Let $M:=\emptyset$ and all people be free.
  \WHILE{there is a free man whose preference list is non-empty}
  \STATE Among those men, let $m$ be the one with the smallest index.
  \IF{the top of $m$'s current preference list consists of only one woman $w$} 
  \STATE Let $m$ propose to $w$.\label{algo:propose1}
  \ENDIF
  \IF{the top of $m$'s current preference list is a tie} 
  \IF{all women in the tie are engaged}
  \STATE Among those women, let $w$ be the one with the smallest index.
  \STATE Let $m$ propose to $w$.\label{algo:propose2}
  \ENDIF
  \IF{there is a free woman in the tie}
  \STATE Among those free women, let $w$ be the one with the smallest index.
  \STATE Let $m$ propose to $w$.\label{algo:propose3}
  \ENDIF
  \ENDIF
  \ENDWHILE
  \STATE Output $M$.
 \end{algorithmic}
\end{algorithm}

There is an exception in the acceptance/rejection rule of a woman, when she receives the first and second proposals.  This is actually the key for guaranteeing 1.5-approximation, but this rule is not used in the subsequent counter-example so we omit it here.  Readers may consult to the original papers for the full description of the algorithm.

It is already proved that the (original) Kir\'aly's algorithm always outputs a stable matching and it is a 1.5-approximate solution, and it is not hard to see that the same results hold for the above NA for MAX SMTI-1TM.  However, as the example in Figures \ref{fig:true} and \ref{fig:false} shows, it is not a man-strategy-proof mechanism.

\begin{figure}[htb]
\begin{center}
\begin{tabular}{ccccccccccccc}
$m_{1}$: & $w_{2}$ & $w_{1}$ & & \hspace{1mm} & $w_{1}$: & $m_{2}$ & $m_{4}$ & $m_{1}$ \\
$m_{2}$: & ($w_{1}$ & $w_{3}$) & & \hspace{1mm} & $w_{2}$: & $m_{4}$ & $m_{1}$ & \\
$m_{3}$: & $w_{3}$ & & & \hspace{1mm} & $w_{3}$: & $m_{2}$ & $m_{3}$ & \\
$m_{4}$: & $w_{1}$ & $w_{2}$ & & \hspace{1mm} & $w_{4}$: & & &
\end{tabular}
\caption{A counter-example (true lists)}\label{fig:true}
\end{center}
\end{figure}

\begin{figure}[htb]
\begin{center}
\begin{tabular}{ccccccccccccc}
$m_{1}$: & $w_{1}$ & $w_{2}$ & & \hspace{1mm} & $w_{1}$: & $m_{2}$ & $m_{4}$ & $m_{1}$ \\
$m_{2}$: & ($w_{1}$ & $w_{3}$) & & \hspace{1mm} & $w_{2}$: & $m_{4}$ & $m_{1}$ & \\
$m_{3}$: & $w_{3}$ & & & \hspace{1mm} & $w_{3}$: & $m_{2}$ & $m_{3}$ & \\
$m_{4}$: & $w_{1}$ & $w_{2}$ & & \hspace{1mm} & $w_{4}$: & & &
\end{tabular}
\caption{A counter-example (manipulated by $m_{1}$)}\label{fig:false}
\end{center}
\end{figure}

If NA is applied to the true preference lists in Figure \ref{fig:true}, the obtained matching is $\{(m_{2}, w_{1}), (m_{3}, w_{3}), (m_{4}, w_{2}) \}$.  Suppose that $m_{1}$ flips the order of $w_{1}$ and $w_{2}$ (Figure \ref{fig:false}).  This time, NA outputs $\{(m_{1}, w_{2}), (m_{2}, w_{3}), (m_{4}, w_{1}) \}$ and $m_{1}$ successfully obtains a partner $w_{2}$.  By proposing to $w_{1}$ first, $m_{1}$ is able to let $m_{2}$ propose to $w_{3}$.  This allows $m_{4}$ to obtain $w_{1}$, which prevents $m_{4}$ from proposing to $w_{2}$.  This eventually makes it possible for $m_{1}$ to obtain $w_{2}$. 

We finally remark that the same example shows that the other two 1.5-approximation algorithms \cite{m09, p11} (with the tie-breaking rule 2 above) are not man-strategy-proof mechanisms either.

\end{document}